\newtheorem{theorem}{Theorem}
\newtheorem{corollary}{Corollary}
\newtheorem{proposition}{Proposition}
\begin{document}

\title{Neural network representation of tensor network and chiral states}

\begin{CJK}{UTF8}{gbsn}

\author{Yichen Huang (黄溢辰)\thanks{yichuang@mit.edu} \thanks{Present address: Center for Theoretical Physics, Massachusetts Institute of Technology, Cambridge, Massachusetts 02139, USA.}}
\affil{Institute for Quantum Information and Matter, California Institute of Technology, Pasadena, California 91125, USA}

\author{Joel E. Moore\thanks{jemoore@berkeley.edu}}
\affil{Department of Physics, University of California, Berkeley, Berkeley, California 94720, USA}
\affil{Materials Sciences Division, Lawrence Berkeley National Laboratory, Berkeley, California 94720, USA}

\maketitle

\end{CJK}

\begin{abstract}

We study the representational power of Boltzmann machines (a type of neural network) in quantum many-body systems. We prove that any (local) tensor network state has a (local) neural network representation. The construction is almost optimal in the sense that the number of parameters in the neural network representation is almost linear in the number of nonzero parameters in the tensor network representation. Despite the difficulty of representing (gapped) chiral topological states with local tensor networks, we construct a quasi-local neural network representation for a chiral $p$-wave superconductor. These results demonstrate the power of Boltzmann machines.

\end{abstract}

\section{Introduction}

A generic state in quantum many-body systems is classically intractable because the dimension of the Hilbert space grows exponentially with the system size. However, physically relevant states are often non-generic in the sense of having structures, making use of which we may overcome the curse of dimensionality. Traditionally, tensor networks are used to characterize such structures and efficiently represent states in classical simulations \cite{VMC08}. Recently, Carleo and Troyer \cite{CT16} proposed neural networks as an (alternative) ansatz for quantum many-body states. Benchmark calculations suggest that this is a promising approach.

Besides numerical experiments, it is also important to explain the working principle of neural network methods. One step in this direction is to characterize the representational power of neural networks. Here we specialize to Boltzmann machines \cite{AHS85}, and our main contributions are
\begin{itemize}
\item We prove that any (local) tensor network state can be converted into a (local) neural network without significantly increasing the number of parameters.
\item Despite the difficulty of representing (gapped) chiral topological states with local tensor networks \cite{DR15}, we construct a quasi-local neural network representation for a chiral $p$-wave superconductor.
\end{itemize}
The first result states that the representational power of neural networks is at least not weaker than that of tensor networks. The second gives a physically relevant example where neural networks may go beyond tensor networks. In combination, these results provide complementary evidence that neural networks are a promising ansatz.

\section{Boltzmann machines}

We provide a minimum background for those people with no prior knowledge of Boltzmann machines. The goal is to motivate the definition of neural network states, rather than a general-purpose introduction from the perspective of machine learning.

Formally, a Boltzmann machine is a type of stochastic recurrent neural network. In the language of physicists, it is a classical Ising model on a weighted undirected graph. Each vertex (also known as a unit or a neuron) of the graph carries a classical Ising variable $s_j=\pm1$ and a local field $h_j\in\mathbb R$, where $j$ is the index of the vertex. For a reason that will soon be clear, the set of vertices is divided into the disjoint union of two subsets $V$ and $H$ so that $|V|+|H|$ is the total number of units. Vertices in $V$ are called visible units, and those in $H$ are called hidden units. For notational simplicity, we assume that visible units have small indices $1,2,\ldots,|V|$, and hidden units have large indices $|V|+1,|V|+2,\ldots,|V|+|H|$. Each edge of the graph carries a weight $w_{jk}\in\mathbb R$ that describes the interaction between $s_j$ and $s_k$. The energy of a configuration is given by
\begin{equation}
E(\{s_j\})=\sum_{j}h_js_j+\sum_{j,k}w_{jk}s_js_k.
\end{equation}
A restricted Boltzmann machine \cite{Smo86} is a Boltzmann machine on a bipartite graph, i.e., $w_{jk}\neq0$ only if the edge $(j,k)$ connects a visible unit and a hidden unit.

At thermal equilibrium, the configurations follow the Boltzmann distribution. Without loss of generality, we fix the temperature $T=1$. Let
\begin{equation}
Z=\sum_{\{s_j\}\in\{\pm1\}^{\times(|V|+|H|)}}e^{-E(\{s_j\})}
\end{equation}
be the partition function. The probability of each configuration $\{s_j\}$ is given by $e^{-E(\{s_j\})}/Z$. Furthermore, the probability of each configuration $\{s_{j\le|V|}\}$ of visible units is the marginal probability obtained by summing over hidden units:
\begin{equation} \label{bm}
P\{s_{j\le|V|}\}=\frac{1}{Z}\sum_{\{s_{j>|V|}\}\in\{\pm1\}^{\times|H|}}e^{-E(\{s_j\})}.
\end{equation}

The process of training a Boltzmann machine is specified as follows. The input is a (normalized) probability distribution $Q$ over the configurations of the visible units. (Here we assume, for simplicity, that $Q$ is given. In practice, $Q$ may not be explicitly given but we are allowed to sample from $Q$.) The goal is to adjust the weights $w_{jk}$ and local fields $h_j$ such that the probability distribution $P$ (\ref{bm}) best approximates (as measured by the Kullback-Leibler divergence or some other distance function) the desired distribution $Q$. This is a variational minimization problem. We minimize a given objective function of $P$ with respect to the ansatz (\ref{bm}), in which the weights and local fields are variational parameters. Note that Monte Carlo techniques are usually used in training a Boltzmann machine.

An interesting question is whether an arbitrary $Q$ can be exactly represented by a Boltzmann machine. Because of the strict positivity of exponential functions, it is easy to see that the answer is no if the probability of some configuration is zero. Nevertheless,

\begin{theorem} [Le Roux and Bengio \cite{LB08}] \label{rbmc}
Any probability distribution $Q$ can be arbitrarily well approximated by a restricted Boltzmann machine, provided that the number $|H|$ of hidden units is the number of configurations with nonzero probability. In general, $|H|\le2^{|V|}$.
\end{theorem}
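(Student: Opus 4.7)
The plan rests on the identity $2\cosh(x)=e^x(1+e^{-2x})$, which I will exploit to design each hidden unit so that its contribution is an almost-indicator of a specific visible configuration. After tracing out the hidden units, the RBM marginal is
\begin{equation}
P(\mathbf{v})\propto\exp\!\Bigl(\sum_j a_j v_j\Bigr)\prod_{i=1}^{|H|}2\cosh\bigl(\ell_i(\mathbf{v})\bigr),\qquad \ell_i(\mathbf{v}):=b_i+\sum_k w_{ik}v_k.
\end{equation}
Because each $e^{\ell_i(\mathbf{v})}$ is linear in $\mathbf{v}$, it can be absorbed into a redefinition $a_j\mapsto a'_j$ of the visible biases, giving
\begin{equation}
P(\mathbf{v})\propto\exp\!\Bigl(\sum_j a'_j v_j\Bigr)\prod_{i=1}^{|H|}\bigl(1+e^{-2\ell_i(\mathbf{v})}\bigr),
\end{equation}
so each hidden unit now contributes an independent factor close either to $1$ or to a much larger number, depending only on the sign of $\ell_i(\mathbf{v})$.

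I would then attach one hidden unit to each configuration $\mathbf{v}^*_i\in\mathrm{supp}(Q)$, choosing $w_{ik}=-\lambda v^*_{i,k}$ and $b_i=\lambda(|V|-1)+\alpha_i$ with $\lambda$ large and $\alpha_i$ a tunable offset. A direct computation using $\mathbf{v}^*_i\cdot\mathbf{v}=|V|-2d(\mathbf{v},\mathbf{v}^*_i)$ gives $\ell_i(\mathbf{v})=\lambda(2d(\mathbf{v},\mathbf{v}^*_i)-1)+\alpha_i$, so $\ell_i(\mathbf{v}^*_i)=-\lambda+\alpha_i$ while $\ell_i(\mathbf{v})\ge\lambda+\alpha_i$ for every $\mathbf{v}\ne\mathbf{v}^*_i$. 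The $i$-th factor is therefore $1+e^{2\lambda-2\alpha_i}$ at $\mathbf{v}=\mathbf{v}^*_i$ and $1+O(e^{-2\lambda})$ at every other configuration: this hidden unit selectively boosts the weight of $\mathbf{v}^*_i$ while essentially not touching the others.

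Combining all $|H|=|\mathrm{supp}(Q)|$ hidden units, the unnormalized weight at each support point $\mathbf{v}^*_j$ is dominated by its own prefactor $e^{2\lambda-2\alpha_j}$, so the offsets $\alpha_1,\ldots,\alpha_{|H|}$ are exactly the free parameters needed to match the target probabilities $Q(\mathbf{v}^*_j)$; setting $a'_j=0$, for example, gives $\alpha_j=-\tfrac{1}{2}\log Q(\mathbf{v}^*_j)$. Meanwhile the weight at any $\mathbf{v}\notin\mathrm{supp}(Q)$ is only a bounded multiple of $\prod_i(1+O(e^{-2\lambda}))$, negligible compared to the $e^{2\lambda}$-scale weights on $\mathrm{supp}(Q)$. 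Sending $\lambda\to\infty$ drives the total variation distance between $P$ and $Q$ to zero, and $|H|\le 2^{|V|}$ follows at once from $|\mathrm{supp}(Q)|\le 2^{|V|}$.

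The main obstacle is quantitative control of the residual perturbations: at each support point $\mathbf{v}^*_j$ the $|H|-1$ non-matching hidden units each contribute a small, $j$-dependent distortion $1+O(e^{-2\lambda})$, and a priori these could skew the fitted ratios. But since each distortion is exponentially small in $\lambda$ and there are at most $2^{|V|}$ of them, the cumulative effect is $1+O(2^{|V|}e^{-2\lambda})\to 1$, which either vanishes as $\lambda\to\infty$ or can be removed by a small $\lambda$-dependent correction to the $\alpha_j$. This is precisely what \emph{arbitrarily well approximated} requires.
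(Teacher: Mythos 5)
Your construction is correct: attaching one hidden unit per support configuration with weights $-\lambda v^*_{i,k}$ and bias $\lambda(|V|-1)+\alpha_i$ makes each factor $1+e^{-2\ell_i(\mathbf v)}$ an approximate indicator of $\mathbf v^*_i$, the offsets $\alpha_i=-\tfrac12\log Q(\mathbf v^*_i)$ fix the ratios, and the $O(2^{|V|}e^{-2\lambda})$ residuals vanish as $\lambda\to\infty$. The paper does not reproduce a proof (it cites Le Roux and Bengio), and your argument is essentially the same one-hidden-unit-per-configuration construction used in that reference.
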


This result can be slightly improved \cite{MA11}. Indeed, a simple counting argument suggests that an exponential number of hidden units is necessary in general. As a function from $\{\pm1\}^{\times|V|}$ to $[0,1]$ with one constraint (normalization), the probability distribution $Q$ has $2^{|V|}-1$ degrees of freedom. Therefore, a good approximation of $Q$ requires an exponential number of bits.

\section{Neural network states}

Carleo and Troyer \cite{CT16} developed a minimal extension of Boltzmann machines to the quantum world. In this extension, each vertex still carries a classical spin (bit). This is very different from another extension \cite{AAR+16}, in which each vertex carries a quantum spin (qubit). It should be clear that the latter extension is more quantum. In the former extension, we use classical computers to simulate quantum many-body systems.

Expanded in the computational basis $\{\pm1\}^{\times|V|}$, a quantum state $|\psi\rangle$ of $|V|$ qubits can be viewed as a function from $\{\pm1\}^{\times|V|}$ to $\bar B(0,1)$ with one constraint (normalization), where $\bar B(0,1)$ denotes the closed region $|z|\le1$ in the complex plane. Recall that a probability distribution $Q$ is characterized by the probability of each configuration. In comparison, a state $|\psi\rangle$ is characterized by the probability amplitude of each configuration. This analog between classical probability distributions and quantum states motivates the following ansatz dubbed neural network states.

Consider a graph as before. The visible units correspond to physical qubits, and hidden units are auxiliary degrees of freedom (to be summed over). The local field $h_j\in\mathbb C$ at each vertex and the weight $w_{jk}\in\mathbb C$ carried by each edge are promoted to complex numbers because probability amplitudes are generally complex. An (unnormalized) neural network state based on a Boltzmann machine is given by
\begin{equation} \label{qbm}
|\psi\rangle=\sum_{\{s_1,s_2,\ldots,s_{|V|+|H|}\}\in\{\pm1\}^{\times(|V|+|H|)}}e^{-\sum_{j}h_js_j-\sum_{j,k}w_{jk}s_js_k}|\{s_1,s_2,\ldots,s_{|V|}\}\rangle.
\end{equation}
Note that summing over hidden units is very different from taking partial trace, for the latter usually results in a mixed state. Similarly, a neural network state based on a restricted Boltzmann machine is given by Eq. (\ref{qbm}) on a bipartite graph as described previously.

A very minor modification of the proof of Theorem \ref{rbmc} leads to

\begin{corollary} \label{rbmq}
Any $N$-qubit quantum state $|\psi\rangle$ can be arbitrarily well approximated by a neural network state based on a restricted Boltzmann machine with $w_{jk}\in\mathbb R, h_j\in\mathbb C$, provided that the number $|H|$ of hidden units is the number of configurations with nonzero probability amplitude. In general, $|H|\le2^N$.
\end{corollary}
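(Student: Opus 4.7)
The plan is to adapt the classical Le Roux--Bengio construction underlying Theorem~\ref{rbmc} so that complex amplitudes, rather than probabilities, are reproduced; the only change needed is to let the biases $h_j$ at both visible and hidden units be complex, while keeping the pairwise weights $w_{jk}$ real. First I would make the ansatz explicit: on a bipartite graph, summing each $s_k \in \{\pm 1\}$ in (\ref{qbm}) gives
\begin{equation}
\psi(s) \propto e^{-\sum_{j \in V} h_j s_j} \prod_{k \in H} 2\cosh\left(h_k + \sum_{j \in V} w_{jk} s_j\right),
\end{equation}
exactly as in the classical case. Using the identity $2\cosh(x) = e^{-x}(1 + e^{2x})$, each hidden-unit factor splits into a piece that is a linear exponential in the visible variables (and so can be absorbed into the visible fields) and a multiplicative bump $1 + e^{2h_k + 2\sum_j w_{jk} s_j}$ whose shape I will engineer.

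Let $S$ denote the set of configurations on which $\psi$ has nonzero amplitude, so $|S| \le 2^n$ by hypothesis, and write $\alpha_k$ for the target amplitude at $s^{(k)} \in S$. For each such $s^{(k)}$ I would add one hidden unit with real weights $w_{jk} = L s^{(k)}_j / 2$ and complex bias $h_k = (b_k - Ln)/2$, where $L > 0$ is a large real scale parameter and $b_k \in \mathbf{C}$ is to be chosen. A short calculation recasts the bump as $1 + e^{b_k - 2L\, d(s, s^{(k)})}$, where $d(\cdot,\cdot)$ is Hamming distance. As $L \to \infty$ this tends to $1 + e^{b_k}$ at $s = s^{(k)}$ and to $1$ on every other configuration, so each hidden unit cleanly isolates its own configuration. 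Choosing $b_k = \log(C\alpha_k)$ with $\log$ the complex logarithm (well-defined since $\alpha_k \ne 0$), the on-support unnormalized amplitudes become proportional to $C\alpha_k$ while off-support amplitudes remain of order unity; after $\ell^2$-normalization, letting $C \to \infty$ suppresses the off-support amplitudes and the neural network state converges to $\psi$.

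The main subtlety, and the reason only arbitrary approximation (not equality) is asserted, is that $\cosh$ is even, so no finite choice of real weights can make a hidden unit contribute exactly $1$ on all off-target configurations. This is the same obstruction present in the classical Theorem~\ref{rbmc} and is what forces the double limit $L, C \to \infty$. Everything else---quantitatively bounding the residual error in $L$ and $C$, absorbing the linear-exponential prefactors into the visible fields, verifying that $|H| = |S|$ hidden units suffice, and checking that the ordering of limits is harmless---is routine and parallels the classical proof line by line, which is why the corollary follows from ``a very minor modification'' of it.
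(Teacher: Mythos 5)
Your construction is correct and is precisely the ``very minor modification'' of the Le Roux--Bengio argument that the paper invokes without writing out: one hidden unit per configuration of nonzero amplitude, real visible--hidden weights that localize that unit at its target configuration via Hamming distance, and a complex hidden bias $b_k=\log(C\alpha_k)$ that imprints the complex amplitude. The bookkeeping (the identity $2\cosh x = e^{-x}(1+e^{2x})$, absorbing the resulting linear exponentials into the visible fields, and the $L,C\to\infty$ limits) all checks out, so nothing further is needed.
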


The formalism of neural network states developed thus far applies to general quantum many-body systems. We now consider the situation that the qubits are arranged on a lattice. The lattice allows us to introduce the notion of locality \cite{Has12} (with respect to the shortest path metric), which underlies almost all successful methods for simulating quantum lattice systems. Hence, it is desirable to incorporate locality into the neural network. To this end, we define a position for each unit. Let each site of the lattice carry a visible unit and some hidden units. We require that $w_{jk}\neq0$ only if units $j$ and $k$ are close to each other. As an example, Deng et al. \cite{DLD16} showed that the ground state of the toric code in two dimensions can be exactly represented as a local neural network state based on a restricted Boltzmann machine.

Using variational quantum Monte Carlo techniques (``quantum Monte Carlo'' is not a quantum algorithm; rather, it is just a classical Monte Carlo algorithm applied to quantum systems), Carleo and Troyer \cite{CT16} performed practical calculations for quantum lattice systems in one and two spatial dimensions. For the models they studied, they observed that variational approaches based on neural networks are at least as good as those based on tensor networks. It seems worthwhile to devote more study to the practical performance of neural network states.

\section{Tensor network states}

We provide a very brief introduction to tensor network states following the presentation in Ref. \cite{GHLS15}, Subsection 6.3 or Ref. \cite{Hua15}, Subsection 2.3. Then, it will be clear almost immediately that any (local) tensor network state has a (local) neural network representation.

An $l$-dimensional tensor $T$ is a multivariable function $T:\{1,2,\ldots,d_1\}\times\{1,2,\ldots,d_2\}\times\cdots\times\{1,2,\ldots,d_l\}\to\mathbb C$, and $D=\max_jd_j$ is called the bond dimension. It is easy to see that $T$ can be reshaped to an $l'$-dimensional ($l'=\sum_j\lceil\log_2 d_j\rceil$ with $\lceil\cdot\rceil$ the ceiling function) tensor $T'$ with bond dimension $2$ by representing every input variable in binary. Moreover, there is a straightforward way to identify $T'$ with an $l'$-qubit unnormalized state expanded in the computational basis:
\begin{equation} \label{N}
|\psi\rangle=\sum_{\{s_j\}\in\{\pm1\}^{\times l'}}T'\left(\frac{s_1+3}{2},\frac{s_2+3}{2},\dots,\frac{s_n+3}{2}\right)|\{s_j\}\rangle.
\end{equation}
Thus, a neural network state with $|V|$ visible units is a $|V|$-dimensional tensor with bond dimension $2$. Indeed, each visible unit corresponds to an input variable to the tensor. As a restatement of Corollary \ref{rbmq},
\begin{corollary} \label{rbmt}
Any $l$-dimensional tensor $T$ can be arbitrarily well approximated by a restricted Boltzmann machine with $w_{jk}\in\mathbb R, h_j\in\mathbb C$, provided that the number $|H|$ of hidden units is the number of nonzero elements in $T$. In general, $|H|\le\prod_{j=1}^ld_j$.
\end{corollary}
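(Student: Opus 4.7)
The plan is to deduce this corollary directly from Corollary \ref{rbmq} using the tensor-to-state embedding described in the two paragraphs immediately preceding the statement. First, I would reshape the $k$-dimensional tensor $M$ of bond dimension $D$ into an $n$-dimensional bond-$2$ tensor $N$ (with $n=\sum_j\lceil\log_2 d_j\rceil$) by encoding each input index in binary, then identify $N$ with the unnormalized $n$-qubit state $\psi$ via equation (\ref{N}). This identification is entrywise, so the set of input configurations $\{s_j\}\in\{\pm1\}^{\otimes n}$ carrying nonzero amplitude in $\psi$ is in bijection with the set of nonzero entries of $N$, which is in bijection with the set of nonzero entries of $M$.

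Next, I would apply Corollary \ref{rbmq} to $\psi$. The corollary guarantees that $\psi$ can be arbitrarily well approximated by a restricted Boltzmann machine with real weights $w_{jk}$ and complex fields $h_j$, using a number of hidden units equal to the number of configurations of nonzero amplitude. By the bijection above, this is exactly the number of nonzero elements of $M$, proving the first claim. The general upper bound $|H|\le\prod_{j\in[k]}d_j$ is then immediate, since the number of nonzero elements cannot exceed the total number of elements of $M$, namely $\prod_j d_j$.

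The one piece of bookkeeping that requires care is that Corollary \ref{rbmq} is stated for a normalized quantum state $\psi$, whereas a tensor $M$ has arbitrary scale. I would handle this by normalizing: apply the corollary to $\psi/\|\psi\|$, and then reinstate the overall scalar $\|\psi\|$. Since multiplying every amplitude by a common constant corresponds to adding a global energy offset (absorbable into the $1/Z$ prefactor of (\ref{qbm}), or equivalently into a trivial shift of any single field $h_j$ accompanied by a compensating change of $Z$), this rescaling does not disturb the RBM structure and preserves the guarantees on $|H|$ and on the reality of $w_{jk}$.

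I do not expect any genuine obstacle here; the result is essentially a translation of Corollary \ref{rbmq} into the tensor language via (\ref{N}). The only mild subtlety is the normalization issue just described, and the only counting step worth writing out carefully is the bijection between nonzero tensor entries and nonzero computational-basis amplitudes, which falls out of the binary reshaping being a relabeling of indices.
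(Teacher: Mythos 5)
Your proposal is correct and follows exactly the route the paper takes: the paper presents this corollary as a direct restatement of Corollary \ref{rbmq} obtained by reshaping $M$ into the bond-dimension-$2$ tensor $N$ and identifying it with the unnormalized state in equation (\ref{N}), so that nonzero tensor entries correspond bijectively to nonzero amplitudes. Your extra remark on normalization is a reasonable piece of bookkeeping that the paper leaves implicit (it already works with unnormalized states and the $1/Z$ prefactor), but it does not change the argument.
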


Informally, tensor contraction is defined as follows. Suppose we have two three-dimensional tensors $T_1(j_1,j_2,j_3),T_2(j'_1,j'_2,j'_3)$ of shapes $d_1\times d_2\times d_3,d'_1\times d_2\times d'_3$, respectively. Their contraction on the middle indices is a four-dimensional tensor
\begin{equation}
\sum_{j=1}^{d_2}T_1(j_1,j,j_3)T_2(j'_1,j,j'_3).
\end{equation}

Similarly, suppose we have two Boltzmann machines with units $\{s_j\}=V\cup H,\{s'_j\}=V'\cup H'$, weights $w_{jk},w'_{jk}$, and local fields $h_j,h'_j$, respectively. Their contraction on the first $c$ visible units is defined as identifying $s_j$ with $s'_j$ and then summing over $s_j$ for $j=1,2,\ldots,c$:
\begin{equation}
\sum_{\{s_1,s_2,\ldots,s_{|V|+|H|}\}\in\{\pm1\}^{\times(|V|+|H|)},\{s'_1,s'_2,\ldots,s'_{|V'|+|H'|}\}\in\{\pm1\}^{\times(|V'|+|H'|)}~\textnormal{s.t.}~s_1=s'_1,s_2=s'_2,\ldots,s_c=s'_c}\cdots,
\end{equation}
where $\cdots$ is given by
\begin{equation}
e^{-\sum_{j}h_js_j-\sum_{j,k}w_{jk}s_js_k-\sum_{j}h'_js'_j-\sum_{j,k}w'_{jk}s'_js'_k}|\{s_{c+1},s_{c+2},\ldots,s_{|V|}\}\rangle\otimes|\{s'_{c+1},s'_{c+2},\ldots,s'_{|V'|}\}\rangle.
\end{equation}
Note that only visible units are allowed to be contracted on, and such visible units become hidden after the contraction. Thus, the Boltzmann machine after contraction has $|V|+|V'|-2c$ visible and $|H|+|H'|+c$ hidden units.

A multi-qubit state $|\psi\rangle$ is a tensor network state if its tensor representation $T'$ (\ref{N}) can be obtained by contracting a network of $C$-dimensional tensors, where $C$ is a small absolute constant. The bond dimension $D$ of a tensor network is defined as the maximum bond dimension of the constituting tensors in the network.

\begin{theorem} \label{main}
Any tensor network state $|\psi\rangle$ can be arbitrarily well approximated by a neural network state based on a Boltzmann machine with $w_{jk}\in\mathbb R, h_j\in\mathbb C$, provided that the number $|H|$ of hidden units is sufficiently large. Let $\emptyset(T_j)$ be the number of nonzero elements in a constituting tensor $T_j$, and $\#(T_j)$ be the number of elements. It suffices that
\begin{equation} \label{9}
|H|=\sum_{j=1}^n(\emptyset(T_j)+\log_2\#(T_j)+O(1)),
\end{equation}
where $n$ is the number of constituting tensors. Furthermore, the number of parameters in the neural network representation is upper bounded by
\begin{equation} \label{10}
\sum_j\emptyset(T_j)\log_2\#(T_j)+\textnormal{less significant terms}.
\end{equation}
The neural network is local (translationally invariant) if the tensor network is local (translationally invariant).
\end{theorem}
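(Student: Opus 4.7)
The plan is compositional. First, apply Corollary \ref{rbmt} to each constituting tensor $T_j$ individually to obtain a restricted Boltzmann machine that approximates $T_j$; then contract these per-tensor RBMs, using the Boltzmann-machine contraction operation defined earlier in this section, following the same contraction pattern that defines the original tensor network. The fact that only visible units are allowed to be contracted matches the setup of Corollary \ref{rbmt}, whose visible units correspond exactly to the (binary-encoded) input indices of $T_j$.

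Concretely, for each $T_j$, after reshaping to bond dimension $2$ there are $\lceil\log_2\#(T_j)\rceil+O(1)$ binary input indices, so Corollary \ref{rbmt} gives an RBM with that many visible units, at most $\emptyset(T_j)$ hidden units, $w_{jk}\in\mathbf R$, and $h_j\in\mathbf C$. Contracting the $n$ such RBMs as prescribed by the tensor-network diagram identifies and sums over the visible units sitting on internal bonds, each of which thereby becomes a hidden unit of the combined Boltzmann machine. Hence the final hidden-unit count is bounded by the sum of per-tensor hidden counts plus the number of internal-bond visible units, which is at most $\sum_j(\emptyset(T_j)+\log_2\#(T_j)+O(1))$, giving (\ref{9}). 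Since each per-tensor RBM is bipartite and contraction introduces no new parameters (it only merges existing units), the total number of weights is at most $\sum_j\emptyset(T_j)(\log_2\#(T_j)+O(1))$ plus local fields and ceiling corrections, which yields (\ref{10}). Locality and translational invariance are immediate: each per-tensor RBM is supported at the position of $T_j$, and the only cross-tensor edges introduced by contraction connect pairs $(T_j,T_k)$ that already share a bond in the original tensor network, hence are nearby (or translation-related) when the tensor network is.

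The one point that requires slight care is upgrading ``arbitrarily well approximated'' from a single tensor to the entire contracted network. Tensor contraction is multilinear in its inputs, so an $\epsilon$-error in each of the $n$ per-tensor approximations propagates into an overall error in $\psi$ bounded by $O(n\epsilon)$ times a prefactor depending only on the bond dimensions and the local constant $C$; choosing the per-tensor approximation accuracy sufficiently small therefore drives the final approximation error below any prescribed threshold. I expect this error-propagation step, together with verifying that the output normalization factor $Z$ behaves well under simultaneous approximation, to be the only real (and mild) technical point; the hidden-unit count, the parameter count, and the locality and translational-invariance claims then reduce to direct bookkeeping on top of the compositional construction.
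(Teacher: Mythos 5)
Your proposal is correct and follows essentially the same route as the paper's own proof: represent each constituting tensor by a restricted Boltzmann machine via Corollary \ref{rbmt}, contract these machines according to the tensor-network pattern (internal-bond visible units becoming hidden), and perform the same bookkeeping for the hidden-unit count (\ref{9}), the parameter count (\ref{10}), and locality/translational invariance. Your closing remark on error propagation through the multilinear contraction is actually more careful than the paper, which asserts the approximation claim without addressing that step.
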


\begin{proof}
We first represent each $T_j$ with a restricted Boltzmann machine as in Corollary \ref{rbmt}, and then contract the restricted Boltzmann machines in the same way as $T_j$'s are contracted. Note that the Boltzmann machine after contraction is generically not restricted. The first term on the right-hand side in Eq. (\ref{9}) is the total number of hidden units in all constituting restricted Boltzmann machines, and the other terms are responsible for the production of hidden units in the contraction process. Equation (\ref{10}) is the total number of parameters in all constituting restricted Boltzmann machines, and the contraction process does not introduce any new parameters. It is obvious that the construction is locality and translational-invariance preserving.
\end{proof}

This result is almost optimal in the sense that the number of parameters in the neural network representation is at most a logarithmic (in the bond dimension) multiple of the number of nonzero parameters in the tensor network representation. In cases that the tensors are sparse (possibly due to the presence of symmetries), this conversion to neural network states has the desirable property of automatically compressing the representation.

As an example, we specialize Theorem \ref{main} to matrix product states \cite{FNW92, PVWC07}.

\begin{corollary} \label{mps}
To leading order, any $N$-qubit matrix product state with bond dimension $D$ has a neural network representation with $2ND^2$ hidden units and $4ND^2\log_2 D$ parameters.
\end{corollary}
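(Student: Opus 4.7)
The plan is to apply Theorem \ref{main} directly to the standard tensor network representation of an MPS and read off the bounds. First I would recall that an $n$-qubit matrix product state of bond dimension $D$ is the contraction of $n$ on-site tensors $T_1, \ldots, T_n$, where each bulk tensor has one physical index of dimension $2$ and two virtual bond indices of dimension $D$, so its shape is $2 \times D \times D$. The two boundary tensors are smaller under open boundary conditions (or one adds a trace over a closing bond for periodic boundary conditions), but these differences only affect subleading terms.

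Next I would substitute the relevant quantities into the formulas (\ref{9}) and (\ref{10}) of Theorem \ref{main}. For each bulk tensor one has $\#(T_j) = 2D^2$, hence $\log_2 \#(T_j) = 1 + 2\log_2 D$, and in the worst case $\emptyset(T_j) \le 2D^2$. Summing (\ref{9}) across the $n$ sites gives $|H| \le \sum_j (2D^2 + 2\log_2 D + O(1)) = 2nD^2 + O(n \log D)$, whose leading-order term in $D$ is $2nD^2$. Summing (\ref{10}) yields the parameter-count bound $\sum_j 2D^2 (1 + 2\log_2 D) = 4nD^2 \log_2 D + O(nD^2)$, whose leading-order term is $4nD^2 \log_2 D$, as claimed.

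Locality (and, if desired, translational invariance) of the construction is inherited automatically from the MPS by the final clause of Theorem \ref{main}, so no additional argument is required. The proof is essentially a bookkeeping exercise and I do not anticipate a genuine obstacle; the only thing one must be careful about is the convention for ``leading order,'' which I would take to be leading in the bond dimension $D$ with $n$ held fixed, so that the $O(n \log D)$ correction to $|H|$ and the $O(nD^2)$ correction to the parameter count are indeed subleading relative to $2nD^2$ and $4nD^2 \log_2 D$ respectively.
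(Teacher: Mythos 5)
Your proposal is correct and is exactly the paper's intended derivation: the corollary is stated as a direct specialization of Theorem \ref{main} to an MPS with $n$ tensors of shape $2\times D\times D$, giving $\emptyset(T_j)\le\#(T_j)=2D^2$ and $\log_2\#(T_j)=1+2\log_2 D$, hence $|H|=2nD^2+O(n\log D)$ and $\sum_j\emptyset(T_j)\log_2\#(T_j)=4nD^2\log_2 D+O(nD^2)$. Your bookkeeping, including the treatment of boundary tensors and the reading of ``leading order,'' matches what the paper leaves implicit.
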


\section{Chiral topological states}

It seems difficult to obtain a local tensor network representation for gapped chiral topological states. Early attempts did not impose locality \cite{GVW10} or just target at expectation values of local observables rather than the wave function \cite{BC11}. Recent progress \cite{WTSC13, DR15} shows that local tensor networks can describe chiral topological states, but the examples there are all gapless. Indeed, Dubail and Read \cite{DR15} even proved a no-go theorem, which roughly states that for any chiral local free-fermion tensor network state, any local parent Hamiltonian is gapless. Here we construct a Boltzmann machine that approximates the unique ground state of a (gapped) chiral $p$-wave superconductor. The approximation error is inverse polynomial in the system size, and the neural network is quasi-local in the sense that the maximum distance of connections between units is logarithmic in the system size. This example explicitly demonstrates the power of Boltzmann machines.

From now on, we consider fermionic systems, for which it is necessary and there are multiple ways to account for the exchange statistics of fermions. We take a straightforward approach: Each vertex carries a Grassmann variable $\xi_j$ rather than an Ising variable $s_j$, and the sum over Ising variables in hidden units is replaced by the Grassmann integral. We work in the second quantization formalism. Let $c_j,c_j^\dag$ for $1\le j\le|V|$ be the fermionic annihilation and creation operators, and $|0\rangle$ be the vacuum state with no fermions. We identify $c_j^\dag$ with $\xi_j$ so that $\Xi|0\rangle$ represents a fermionic state, where $\Xi$ is an arbitrary Grassmann variable in the algebra generated by $\xi_1,\xi_2,\ldots,\xi_{|V|}$. As an analog of Eq. (\ref{qbm}),
\begin{equation} \label{fnn}
|\psi\rangle=\left(\int e^{-\sum_{j,k}w_{jk}\xi_j\xi_k}\prod_{l=1}^{|H|}\mathrm d\xi_{|V|+l}\right)|0\rangle
\end{equation}
is an (unnormalized Gaussian) fermionic neural network state. Note that we have set $h_j=0$. This class of states appeared previously in Ref. \cite{DR15}.

One of the simplest examples of a chiral topological phase is the $p+ip$ superconductor \cite{RG00}. For concreteness, we consider the lattice model in Ref. \cite{BN15}. Let $c_{\vec x},c_{\vec x}^\dag$ be the fermionic annihilation and creation operators at site $\vec x\in\mathbb Z^2$ on a two-dimensional square lattice. Let $\vec i=(1,0)$ and $\vec j=(0,1)$ be the unit vectors in the $x$ and $y$ axes, respectively. The Hamiltonian is
\begin{equation} \label{model}
H=\sum_{\vec x\in\mathbb Z^2}c_{\vec x+\vec i}^\dag c_{\vec x}+c_{\vec x+\vec j}^\dag c_{\vec x}+c_{\vec x+\vec i}^\dag c_{\vec x}^\dag+ic_{\vec x+\vec j}^\dag c_{\vec x}^\dag+{\rm H.c.}-2\mu\sum_{\vec x\in\mathbb Z^2}c_{\vec x}^\dag c_{\vec x},\quad\mu\in\mathbb R.
\end{equation}
Let $\vec k=(k_x,k_y)$ be the lattice momentum, and $\int_{\textnormal{BZ}}\mathrm d\vec k$ be the integral over the Brillouin zone $(-\pi,\pi]^{\times2}$. To slightly simplify the presentation, we will be sloppy about unimportant overall prefactors in the calculations below. The Fourier transform
\begin{equation} \label{ft}
c_{\vec x}^\dag=\int_{\textnormal{BZ}}e^{-i\vec k\cdot\vec x}c_{\vec k}^\dag\,\mathrm d\vec k,\quad c_{\vec x}=\int_{\textnormal{BZ}}e^{i\vec k\cdot\vec x}c_{\vec k}\,\mathrm d\vec k
\end{equation}
leads to
\begin{equation}
H=\int_{\textnormal{BZ}}H_{\vec k}\,\mathrm d\vec k,\quad H_{\vec k}=\Delta_{\vec k}c_{\vec k}^\dag c_{-\vec k}^\dag+\Delta_{\vec k}^*c_{-\vec k}c_{\vec k}+2M_{\vec k}c_{\vec k}^\dag c_{\vec k}
\end{equation}
in the momentum space, where
\begin{equation}
\Delta_{\vec k}=\sin k_y-i\sin k_x,\quad M_{\vec k}=\cos k_x+\cos k_y-\mu.
\end{equation}
The quasiparticle spectrum is given by
\begin{equation}
E_{\vec k}=2\sqrt{|\Delta_{\vec k}|^2+M_{\vec k}^2}=2\sqrt{\sin^2k_x+\sin^2k_y+(\cos k_x+\cos k_y-\mu)^2}.
\end{equation}
Hence, $H$ is gapless for $\mu=0,\pm2$ and gapped otherwise. The unique ground state of $H$ is
\begin{equation} \label{ff}
|\psi\rangle\propto e^{\frac12\int_{\textnormal{BZ}}\frac{v_{\vec k}}{u_{\vec k}}c_{\vec k}^\dag c_{-\vec k}^\dag\,\mathrm d\vec k}|0\rangle,
\end{equation}
where $|0\rangle$ is the vacuum state, and
\begin{equation} \label{uv}
u_{\vec k}=-\sqrt{|\Delta_{\vec k}|^2+M_{\vec k}^2}-M_{\vec k},\quad v_{\vec k}=\Delta_{\vec k}
\end{equation}
so that $u_{\vec k}=u_{-\vec k}$ and $v_{\vec k}=-v_{-\vec k}$. It is not difficult to see that the model (\ref{model}) is a trivial superconductor for $|\mu|>2$. It is topological superconductors with opposite chirality for $-2<\mu<0$ and $0<\mu<2$, respectively.

Indeed, a state of the form (\ref{ff}) has an exact local neural network representation (\ref{fnn}) \cite{DR15} if $u_{\vec k},v_{\vec k}$ are constant-degree trigonometric polynomials, i.e., polynomials in $e^{\pm ik_x},e^{\pm ik_y}$. In particular, we have each site $\vec x$ carry a visible and a hidden unit. To simplify the notation, the Grassmann variable in the visible unit is identified with $c_{\vec x}^\dag$, and that in the hidden unit is denoted by $\xi_{\vec x}$. The Fourier transform of Grassmann variables is defined as
\begin{equation}
\xi_{\vec x}=\int_{\textnormal BZ}e^{-i\vec k\cdot\vec x}\xi_{\vec k}\,\mathrm d\vec k.
\end{equation}
In the momentum space, it is easy to see that the state (\ref{ff}) can be represented as
\begin{equation} \label{tp}
|\psi\rangle\propto\left(\int[\mathrm d\xi_{\vec k}]e^{\int_{\textnormal{BZ}}(v_{\vec k}\xi_{-\vec k}c_{\vec k}^\dag-\frac12\xi_{-\vec k}v_{-\vec k}u_{\vec k}\xi_{\vec k})\,\mathrm d\vec k}\right)|0\rangle,
\end{equation}
where $\int[\mathrm d\xi_{\vec k}]$ denotes the integral over all Grassmann variables in the momentum space with a proper measure. Transforming to the real space, $[\mathrm d\xi_{\vec k}]$ becomes $\prod_{\vec x\in\mathbb Z^2}\mathrm d\xi_{\vec x}$, and the exponent in parentheses is local because $u_{\vec k},v_{\vec k}$ are trigonometric polynomials (the maximum distance of connections between units is proportional to the degree of the trigonometric polynomials). Thus, Eq. (\ref{tp}) reduces to Eq. (\ref{fnn}) with additional properties: (i) the neural network representation is translationally invariant; (ii) there are no connections between visible units.

In the remainder of this section, standard asymptotic notations are used extensively. Let $f,g:\mathbb R^+\to\mathbb R^+$ be two functions. One writes $f(x)=O(g(x))$ if and only if there exist constants $M,x_0>0$ such that $f(x)\le Mg(x)$ for all $x>x_0$; $f(x)=\Omega(g(x))$ if and only if there exist constants $M,x_0>0$ such that $f(x)\ge Mg(x)$ for all $x>x_0$; $f(x)=o(g(x))$ if and only if for any constant $M>0$ there exists a constant $x_0>0$ such that $f(x)<Mg(x)$ for all $x>x_0$.

Generically, exact representations are too much to ask for; hence, approximate representations are acceptable. Furthermore, we usually have to work with a finite system size in order to rigorously quantify the approximation error and the succinctness of the representation. We now justify these statements with an example. Matrix product states are an excellent ansatz for ground states in one-dimensional gapped systems \cite{Has07}. A generic gapped ground state in a chain of $N$ spins (i) cannot be exactly written as a matrix product state with bond dimension $e^{o(N)}$; (ii) is not expected to be approximated (in the sense of $99\%$ fidelity) by a matrix product state with bond dimension $O(1)$ because errors may accumulate while we truncate the Schmidt coefficients across every cut \cite{VC06}; (iii) can be approximated by a matrix product state with bond dimension $N^{o(1)}$ \cite{AKLV13, Hua14}.

The goal is to construct an as-local-as-possible neural network representation (\ref{fnn}) for the ground state of the model (\ref{model}).

\begin{proposition}
Let $|\psi\rangle$ be the ground state of the model (\ref{model}) on a (finite) square lattice of size $L\times L$ with periodic boundary conditions. There exists a neural network state $|\phi\rangle$ such that the fidelity $|\langle\phi|\psi\rangle|\ge1-1/\poly(L)$ and that the maximum distance of connections between units is $O(\epsilon^{-1}\log L)$, where $\epsilon$ is the energy gap.
\end{proposition}

\begin{proof}
For concreteness, we consider $0<\mu<2$. It should be clear that the same result holds for other values of $\mu$ provided that the system is gapped. The main idea is to approximate $u_{\vec k}$ (\ref{uv}) with a trigonometric polynomial. The approximation error is exponentially small in the degree of the polynomial because $u_{\vec k}$ is a real analytic function of $\vec k$. The locality of the neural network is due to the smallness of the degree.

We now provide the details of the construction. Assume $L$ is odd. This slightly simplifies the analysis, but it should be clear that a minor modification of the analysis leads to the same result for even $L$. We abuse the notation by letting $\textnormal{BZ}$ denote the set of viable points $\{0,\pm2\pi/L,\ldots,\pm(L-1)\pi/L\}^{\times2}$ for lattice momenta in the Brillouin zone, and
\begin{equation}
\textnormal{hBZ}=\{\vec k\in\textnormal{BZ}|k_x>0\lor(k_x=0\land k_y>0)\}
\end{equation}
be the ``right half'' of $\textnormal{BZ}$. The Fourier transform (\ref{ft}) becomes discrete:
\begin{equation}
c_{\vec x}^\dag=\frac1{\sqrt L}\sum_{\vec k\in\mathrm{BZ}}e^{-i\vec k\cdot\vec x}c_{\vec k}^\dag,\quad c_{\vec x}=\frac1{\sqrt L}\sum_{\vec k\in\mathrm{BZ}}e^{i\vec k\cdot\vec x}c_{\vec k},
\end{equation}
and the Hamiltonian in the momentum space is given by
\begin{equation}
H=2\sum_{\vec k\in\textnormal{hBZ}}(M_{\vec k}c_{\vec k}^\dag c_{\vec k}+M_{\vec k}c_{-\vec k}^\dag c_{-\vec k}+\Delta_{\vec k}c_{\vec k}^\dag c_{-\vec k}^\dag+\Delta_{\vec k}^*c_{-\vec k}c_{\vec k})+2M_{\vec k=\vec{0}}c_{\vec k=\vec{0}}^\dag c_{\vec k=\vec{0}}.
\end{equation}
The normalized ground state is
\begin{equation}
|\psi\rangle=\bigotimes_{\vec k\in\textnormal{hBZ}}|\psi_{\vec k}\rangle,\quad|\psi_{\vec k}\rangle\propto e^{v_{\vec k}c_{\vec k}^\dag c_{-\vec k}^\dag/u_{\vec k}}|0\rangle.
\end{equation}
This equation should be understood as the $\vec k=\vec 0$ mode of $|\psi\rangle$ being vacant. Consider the normalized state
\begin{equation}
|\phi^{(m)}\rangle=\bigotimes_{\vec k\in\textnormal{hBZ}}|\phi^{(m)}_{\vec k}\rangle,\quad|\phi^{(m)}_{\vec k}\rangle\propto e^{v_{\vec k}c_{\vec k}^\dag c_{-\vec k}^\dag/u^{(m)}_{\vec k}}|0\rangle,
\end{equation}
where $u^{(m)}_{\vec k}$ is a degree-$m$ trigonometric polynomial obtained by expanding $u_{\vec k}$ in Fourier series and computing the partial sum up to order $m$. Similar to Eq. (\ref{tp}), $|\phi^{(m)}\rangle$ has an exact neural network state representation (\ref{fnn}) such that the maximum distance of connections between units is $O(m)$.

As $u_{\vec k}$ is a real analytic function of $\vec k$, its Fourier coefficients of order $m$ decay exponentially as $e^{-m/\xi}$ for some constant $\xi>0$. The decay rate $\xi$ is a function of $\mu$ and can be solved analytically; see, e.g., Ref. \cite{Tre96}, Chapter 2. In the regime the energy gap $\epsilon$ is small, we obtain $\xi=O(1/\epsilon)$. Therefore,
\begin{equation} \label{25}
|u^{(m)}_{\vec k}-u_{\vec k}|=e^{-\Omega(\epsilon m)},\quad\forall\vec k\in\textnormal{hBZ}.
\end{equation}
Furthermore, the absolute values of $u_{\vec k},v_{\vec k}$ are bounded away from $0$:
\begin{equation} \label{26}
|u_{\vec k}|\ge\Omega(L^{-2}),\quad|v_{\vec k}|\ge\Omega(L^{-1}),\quad\forall\vec k\in\textnormal{hBZ}.
\end{equation}
Equations (\ref{25}), (\ref{26}) imply
\begin{equation}
f_{\vec k}:=|\langle\phi^{(m)}_{\vec k}|\psi_{\vec k}\rangle|\ge1-1/\poly(L),\quad\forall\vec k\in\textnormal{hBZ}
\end{equation}
for $m=O(\epsilon^{-1}\log L)$ with a sufficiently large constant prefactor hidden in the big-O notation. As $|\psi\rangle,|\phi^{(m)}\rangle$ are product states in the momentum space, the fidelity is given by
\begin{equation}
|\langle\phi^{(m)}|\psi\rangle|=\prod_{\vec k\in\mathrm{hBZ}}f_{\vec k}\ge(1-1/\poly(L))^{O(L^2)}=1-\frac{1}{\poly(L)}.
\end{equation}
We complete the proof by letting $|\phi\rangle=|\phi^{(m)}\rangle$.
\end{proof}

It is not difficult to see that $|\psi\rangle$ is well approximated by a thermal state at inverse temperature $O(\log L)$. Since the thermal state has a projected entangled pair approximation with quasi-polynomial bond dimension $e^{O(\log^2L)}$ \cite{Has06, KGK+14, MSVC15}, there exists a projected entangled pair state $|\varphi\rangle$ with bond dimension $e^{O(\log^2L)}$ such that $|\langle\varphi|\psi\rangle|\ge1-1/\poly(L)$. It is an open problem to improve the bond dimension of $|\varphi\rangle$ to $\poly(L)$.

\section*{Acknowledgments and notes}

The authors would like to thank Xie Chen for collaboration in the early stages of this project. Y.H. acknowledges funding provided by the Institute for Quantum Information and Matter, an NSF Physics Frontiers Center (NSF Grant PHY-1733907) with support of the Gordon and Betty Moore Foundation (GBMF-2644). J.E.M. is supported by NSF DMR-1507141, DMR-1918065 and a Simons Investigatorship.

Part of this work was presented on November 27, 2014 (Thanksgiving day!) at the Perimeter Institute for Theoretical Physics. Very recently, we became aware of some related papers \cite{CCX+17, DLD17, GD17}, which studied the relationship between neural and tensor network states using different methods. In particular, Theorem \ref{main} and Corollary \ref{mps} are stronger than Theorem 3 in Ref. \cite{GD17}. After the present work had been on arXiv, some other related papers \cite{NDYI17, RS18, Cla18, GPA+18, KLB17, CNI18, LGD19, LPZZ21} appeared.

\printbibliography

\end{document}